\crefname{enumi}{}{}
\crefname{equation}{}{}
\def\@tocline#1#2#3#4#5#6#7{\relax
  \ifnum #1>\c@tocdepth 
  \else
    \par \addpenalty\@secpenalty\addvspace{#2}%
    \begingroup \hyphenpenalty\@M
    \@ifempty{#4}{%
      \@tempdima\csname r@tocindent\number#1\endcsname\relax
    }{%
      \@tempdima#4\relax
    }%
    \parindent\z@ \leftskip#3\relax \advance\leftskip\@tempdima\relax
    \rightskip\@pnumwidth plus4em \parfillskip-\@pnumwidth
    #5\leavevmode\hskip-\@tempdima
      \ifcase #1
       \or\or \hskip 1em \or \hskip 2em \else \hskip 3em \fi%
      #6\nobreak\relax
    \dotfill\hbox to\@pnumwidth{\@tocpagenum{#7}}\par
    \nobreak
    \endgroup
  \fi}
\newtheorem{theorem}{Theorem}
\newtheorem*{theorem*}{Theorem}
\newtheorem{proposition}{Proposition}[section]
\newtheorem{lemma}[proposition]{Lemma}
\theoremstyle{definition}
\newtheorem{definition}[proposition]{Definition}
\newtheorem{example}[proposition]{Example}
\numberwithin{equation}{section}
\def \R {\mathbb {R}}
\def \N {\mathbb {N}}
\def \E {\mathbb{E}}
\def \P {\mathbb{P}}
\def \V {\mathbb{V}}
\renewcommand{\hat}{\widehat}
\renewcommand{\tilde}{\widetilde}
\newcommand{\eps}{\varepsilon}
\begin{document}

\title[Sequential inductive prediction intervals]{Sequential inductive prediction intervals}
\author[Avelin]{Benny Avelin}
\address{Combient Mix (Silo AI) \\ Mäster Samuelsgatan 56, 111 21 Stockholm, Sweden}
\email{benny.avelin@silo.ai}

\keywords{}

\subjclass{}
\date{\today}
\begin{abstract}
	In this paper we explore the concept of sequential inductive prediction intervals using theory from sequential testing. We furthermore introduce a 3-parameter PAC definition of prediction intervals that allows us via simulation to achieve almost sharp bounds with high probability.
\end{abstract}

\maketitle

\section{Coverage and prediction intervals}
If we let $X$ correspond to a prediction, then a $1-\alpha$ prediction interval is in the simplest case nothing other than a set $C$ such that the measure of that set is exactly $1-\alpha$, i.e.
\begin{align*}
	\P(X \in C) = 1-\alpha.
\end{align*}
The set $C$ is unknown, and we would like to construct it based on observations. Lets us consider a sequence of i.i.d.~random variables $X_1,\ldots, X_n$ then say we have a way to determine $C ( X_1,\ldots,X_{n-1})$ as a function of the first $n-1$ observations, then the requirement of being a prediction interval is usually
\begin{align}\label{eq:predguarantee1}
	\P(X_n \in C(X_1,\ldots,X_{n-1})) = 1-\alpha.
\end{align}
In this simple setting we could guarantee exactly the $1-\alpha$ level if we take $C = [\hat q_{\alpha/2}, \hat q_{1-\alpha/2}]$, where $\hat q_{\delta}$ is the empirical $\delta$ quantile, with this choice and if $(n-1) \alpha/2 \in \N$ then \cref{eq:predguarantee1} holds. This is called a marginal coverage guarantee.

In the context of regression the prediction interval statement becomes for a sequence of i.i.d. observations $(X_1,Y_1),\ldots,(X_n,Y_n)$
\begin{align*}
	\P(Y_n \in C((X_1,Y_1),\ldots,(X_{n-1},Y_{n-1}),X_n)) = 1-\alpha.
\end{align*}
To achieve this, assume we have been given some data and say that we have found a function $f(X)$ approximating the regression function $\E[Y \mid X]$, then if we define the residual $Z_i = Y-f(X_i)$ we can form the prediction interval as $f(X_n) + C$ where $C$ is the same quantile based interval as in the previous example, but this time based on $Z_1,\ldots,Z_{n-1}$.

In the framework of conformal prediction \cite{VGS, PPVG, LW}, one would construct a well-designed non-conformity score. The non-conformity score is in the formal mathematical sense a function from the bag of previous observations and the new observation to a real number. A simple example would be to take what empirical quantile the new observation belongs to if we add it to the bag of previous observations.
 
An extension of the idea of marginal guarantee is that of a sequential guarantee. Consider again a sequence of random variables $Z_1,\ldots,Z_n$ and let's say we have a sequence of prediction intervals $C_i = C(Z_1,\ldots,Z_i)$, the transductive prediction problem (\cite{VovkOriginal}) is to guarantee that
\begin{align*}
	\P(Z_{i} \in C_{i-1}) = \alpha
\end{align*}
for all $i$ and that
the sequence of errors
\begin{align*}
	E_i = \mathbb{I}_{Z_i \in C_{i-1}}
\end{align*}
is an i.i.d. sequence of random variables. With a simple randomization trick (to avoid the discreteness of the problem) we can essentially use the empirical quantiles here, see~\cite{Vovk12}.

The interesting thing with respect to this paper, is that we can make this guarantee, but only for the next sample. If we instead would like to guarantee that
\begin{align}\label{eq:inductive}
	\P(\P(Z_i \in C(Z_1,\ldots,Z_{i-1}) \mid Z_1,\ldots,Z_{i-1}) \geq 1-\alpha) \geq 1-\delta
\end{align}
this is denoted as the conditional coverage and is phrased as a 2-parameter PAC (Probably Approximately Correct,~\cite{PAC}) problem, essentially introduced in~\cite{Vovk12}. What is interesting here, is that if we fix the value of $i$ then this denotes that with probability greater than $1-\delta$ we can, given the calibration data $Z_1,\ldots,Z_{i-1}$, construct an interval that has at least the correct coverage probability. We interpret this to mean that the prediction interval holds for an arbitrary number of future observations, which is the so-called inductive setting (see~\cite{Vovk12}). That is, we want to construct an interval which with high probability can be trusted.

Let us consider a scenario that highlights the differences between the transductive and the inductive form of prediction intervals. In the first we create a sequence of intervals for which the next sample will land with a guaranteed probability, which is only true for one observation. This would be useful in the online setting when you are trying to bet on the next upcoming value in the context of say finance.
This might seem subtle, but the fact that the consecutive errors $E_i$ are i.i.d.~does not guarantee that the probability of the next sample being in the interval given the previous observations is anything exact or above $1-\alpha$, if we have many samples it will however be close with high probability. 

In the second setting, i.e.~when we talk about the inductive form, we are trying to create an interval that we can trust for all eternity, or at least for a fixed number of new samples which was considered in \cite{Vovk12}. This is different because we have instead phrased it as a decision problem. Namely, if the procedure we design produces a prediction interval, then if we say that this interval has a coverage of at least $1-\alpha$ we would be right $1-\delta$ of the time. This is very similar to producing a confidence interval, where the procedure of computing the interval and deciding that the true parameter lies in the computed interval, we would be right $1-\delta$ of the time.

In terms of selling a product where we either train the model once and sell it or are intermittently updating the model with new data, we cannot control how many times the customer will use the model for predictions. This means that the transductive setting is not useful and this second style of producing inductive intervals makes more sense.

The statement in \cref{eq:inductive} considers that we have chosen a sample size $i$ and construct our prediction interval based on that sample size. This does not leverage the fact that as you see more testing points you could tighten the prediction interval to be closer to the true quantiles of the distribution. This is intimately close to the idea of sequential testing, and by the duality of tests and confidence intervals, the construction of confidence sequences for parameters.

A confidence sequence is something that satisfies
\begin{align*}
	\P(\forall n: \text{parameter in set defined by } (X_1,\ldots,X_n)) \geq 1-\delta.
\end{align*}
For confidence intervals this would be a sequence of confidence intervals constructed such that the probability that our parameter is in the set is at least $1-\delta$ for all $n$ uniformly. This means that we are allowed to peek at our test at all times, the price we pay for this peeking is that the intervals are necessarily bigger.

The same idea can be phrased for prediction, in this case we call $C_n$ the prediction intervals and $X$ is then the true value
\begin{align*}
	\P \left (\forall n: \P(X \in C_n) \geq 1-\alpha \right ) \geq 1-\delta.
\end{align*}
That is, our sequence of prediction intervals will have coverage at least $1-\alpha$ with probability greater than $1-\delta$. This statement would be moot if we did not require $C_n$ to get tighter as $n \to \infty$, specifically we would like $\P(X \in C_n) \to 1-\alpha$ as $n \to \infty$ with the best possible rate.

\subsection{Confidence sequence for the quantile and prediction intervals}

We will base our discussion on a recent result \cite{HowRam}.

Let $X$ be a real valued random variable. Write $F(x) := \P(X \leq x)$ as the distribution function, define the empirical distribution of an i.i.d.~sequence $X_1,\ldots,X_n$ sampled from $F$ as $\hat F_n(x) = \frac{1}{n}\sum_{i=1}^n \mathbb{I}_{X_i \leq x}$. Define the upper and lower quantile functions for $p \in [0,1]$,
\begin{align*}
	Q(p) &:= \sup\{x: F(x) \leq p\} \\
	Q^-(p) &:= \sup\{x: F(x) < p\},
\end{align*}
respectively. Denote $\hat Q_n(p)$ and $\hat Q_n^-(p)$ the upper and lower quantile functions w.r.t. $\hat F_n$.

Let $Q$ be the upper quantile function, let $\hat F$ be the empirical distribution function and $\hat Q$ be the empirical quantile function. Superscript $-$ denotes the corresponding lower quantile functions.

\begin{theorem}[\cite{HowRam}] \label{thm:1}
	Let $Z_1,\ldots \overset{IID}\sim F$, and let $Q$ be the corresponding quantile function. Then, for a fixed $\delta \in (0,1)$ the following holds
	\begin{align*}
		\P(\forall i \in \N, \forall p \in (0,1): Q(p) \in [\hat Q_i(p-g_i),\hat Q_i^-(p+g_i)]) \geq 1-\delta,
	\end{align*}
	where
	\begin{align*}
		g_i = 0.85 \sqrt{\frac{1}{i} [\log(\log(ei)) + 0.8 \log(1612/\delta)]}
	\end{align*}

\end{theorem}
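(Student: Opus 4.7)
The plan is to reduce the quantile statement to a time-uniform Kolmogorov--Smirnov type concentration inequality for the empirical distribution function, and then invert via the duality between quantile and distribution functions. Concretely, I would first aim to prove the companion statement
\begin{align*}
	\P\bigl(\forall i \in \N, \forall x \in \R:\ |\hat F_i(x) - F(x)| \leq g_i\bigr) \geq 1-\delta,
\end{align*}
with the same $g_i$ as in the theorem, and then derive the quantile bracketing from it.

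The inversion step is the routine half. Working on the high-probability event above, for fixed $p \in (0,1)$ and $i$, I would use that for any threshold $x$, $\hat F_i(x) \leq p - g_i$ implies $F(x) \leq p$, so $x \leq Q(p)$; taking suprema this yields $\hat Q_i(p-g_i) \leq Q(p)$. Symmetrically, $\hat F_i(x) \geq p + g_i$ implies $F(x) \geq p$ and hence $x \geq Q^-(p)$, giving $Q(p) \leq Q^-(p) \leq \hat Q_i^-(p+g_i)$ (after dealing with the flat pieces of $F$, where the upper/lower quantile distinction matters). The careful bookkeeping here is exactly why the statement is phrased with $Q$ on the left but $\hat Q_i^-$ on the right.

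The hard part, and where I would rely on the machinery of \cite{HowRam}, is establishing the uniform-in-$(i,x)$ CDF bound with an iterated-logarithm rate rather than just a $\sqrt{\log(1/\delta)/i}$ rate. For each fixed $x$, the process $M_i(x) = \sum_{j=1}^{i}(\I_{X_j \leq x} - F(x))$ is a bounded-increment martingale, so one has access to sub-Bernoulli/sub-Gaussian time-uniform boundaries obtained from mixture supermartingales or the stitching/line-crossing technique, which produce boundaries of the form $c\sqrt{i(\log\log(ei) + \log(1/\delta))}$ with explicit constants. To promote this to a statement uniform in $x$, I would exploit that $\hat F_i$ takes at most $i+1$ distinct values and $F$ is monotone, so at each time it suffices to control deviations at the $i$ sample locations; a VC-type/chaining argument (as in DKW) then closes the gap, and the factor $1612$ absorbs the union-bound cost of combining the time-uniform boundary with the VC-class uniformity in $x$.

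The main obstacle is precisely this combination. A naive approach --- taking a union bound over $x$ and separately a union bound over $i$ --- blows the constants up or loses the $\log\log$ rate. The technical contribution I would need to import from \cite{HowRam} is a single mixture/martingale construction that is simultaneously uniform in the threshold $x$ and in the sample size $i$, which is what allows the $0.85$ multiplicative constant and keeps the extra logarithmic cost bounded by the constant $1612$ inside a single $\log$. Everything else --- the reduction to the CDF statement, the inversion to quantiles, and the handling of discontinuities of $F$ --- is then essentially formal.
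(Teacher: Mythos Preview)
The paper does not give its own proof of this theorem: it is stated as a result of \cite{HowRam} and used as a black box for the rest of the paper, so there is no in-paper argument to compare your proposal against.

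That said, your outline---first establish a time-uniform DKW inequality $\P(\forall i,\forall x:\ |\hat F_i(x)-F(x)|\le g_i)\ge 1-\delta$ via the supermartingale/stitching machinery of \cite{HowRam}, then invert to quantiles---is indeed the structure of the original proof, and you correctly identify that the delicate point is combining uniformity in $x$ with uniformity in $i$ without losing the iterated-logarithm rate. One small slip in your inversion: you write $Q(p)\le Q^-(p)$, but the inequality goes the other way, since $\{x:F(x)<p\}\subset\{x:F(x)\le p\}$ gives $Q^-(p)\le Q(p)$. The upper half of the bracketing therefore needs slightly different bookkeeping (on the good event, $F(x)\le p$ forces $\hat F_i(x)\le p+g_i$, and one then passes through the infimum characterization $\hat Q_i^-(q)=\inf\{x:\hat F_i(x)\ge q\}$), which is precisely why $\hat Q_i^-$ rather than $\hat Q_i$ appears on the right-hand side of the interval.
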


The good thing about the above is that the probability is over all $i$ and over all $p$-quantiles simultaneously (this is because it is a consequence of a DKW style inequality, i.e.~a bound on the $L^\infty$ norm of $F-\hat F$). We can leverage this fact to get stronger control and obtain also an upper bound on the coverage in \cref{eq:inductive}.
We will start with the lower bound as it is easier. Choose a confidence level $\alpha$ and a failure rate $\delta$ and note that \cref{thm:1} implies
\begin{align}\label{eq:quantile0}
	\P\left(\forall i, \alpha \in (0,1): \hat Q_i\left(\frac{\alpha}{2}-g_i\right)\leq Q\left(\frac{\alpha}{2}\right); Q\left(1-\frac{\alpha}{2}\right) \leq \hat Q_i^-\left(1-\frac{\alpha}{2}+g_i\right) \right) \geq 1-\delta,
\end{align}
%
which implies that if we define the two intervals
\begin{align*}
	C^{l,\alpha}_i &= [\hat Q_i(\alpha/2-g_i),\hat Q_i^-(1-\alpha/2+g_i)]\\
	C^{u,\alpha}_i &= [\hat Q_i^-(\alpha/2+g_i),\hat Q_i(1-\alpha/2-g_i)]
\end{align*}
then
\begin{align} \label{eq:prediction_interval_lower}
	\P(\forall i, \alpha \in (0,1): \P(Z \in C^{l,\alpha}_i) \geq 1-\alpha) \geq 1-\delta,
\end{align}
the upper bound can be obtained similarly, saying that
\begin{align*}
	\P(\forall i,\alpha \in (0,1): \P(Z \in C^{u,\alpha}_i) \leq 1-\alpha) \geq 1-\delta.
\end{align*}
Note that from \cref{eq:quantile0} tells us that the upper and lower bound holds simultaneously with probability greater than $1-\delta$, so we don't have to use the union bound to combine them.
This provides good control over the coverage probability, the issue is that we obtain two sets, one bigger with the lower bound and one smaller with the upper bound. We can rephrase this for a single interval, where we get an upper and lower bound, however we have to choose if we want a deterministic lower bound and a random upper bound, or we can choose a deterministic upper bound and a random lower bound. 

Let us show how we do this, we begin by defining
\begin{align*}
	\alpha^l_i := \sup \{\tilde \alpha \in [0,1]: \hat Q^-_i(\tilde \alpha/2+g_i) < \hat Q_i(\alpha/2-g_i)\}
\end{align*}
where the supremum is $0$ if the set is empty. We also define
\begin{align*}
	\alpha^u_i := \sup \{\tilde \alpha \in [0,1]: \hat Q_i(1-\tilde \alpha/2-g_i) > \hat Q_i(1-\alpha/2+g_i)\}
\end{align*}
where the supremum is $0$ if the set is empty. Then define 
\begin{align*}
	\alpha^\ast_i(\alpha) = \min(\alpha^l_i(\alpha),\alpha^u_i(\alpha))	.
\end{align*}
From this construction we get that $C_i^{l,\alpha} \subset C_i^{u,\alpha^\ast_i(\alpha)}$, which together with the fact that \cref{eq:quantile0} holds for all $\alpha$, we get the following theorem

\begin{theorem} \label{thm:prediction}
	We have for the observable value $\alpha^\ast_i$ that
	\begin{align*}
		\P(\forall i,\alpha \in (0,1): 1-\alpha^\ast_i(\alpha) \geq \P(Z \in C^{l,\alpha}_i) \geq 1-\alpha) \geq 1-\delta.
	\end{align*}
\end{theorem}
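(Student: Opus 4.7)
The plan is to combine the lower bound already established in \cref{eq:prediction_interval_lower} with a matching upper bound on $\P(Z \in C^{l,\alpha}_i)$, both extracted from the uniform confidence sequence of \cref{thm:1}. Concretely, let $E$ denote the event on which the uniform statement \cref{eq:quantile0} holds simultaneously for every $i$ and every $\alpha \in (0,1)$; this event has probability at least $1-\delta$. As remarked just after \cref{eq:quantile0}, on $E$ both
\begin{align*}
	\P(Z \in C^{l,\alpha}_i) \geq 1-\alpha \quad \text{and} \quad \P(Z \in C^{u,\alpha}_i) \leq 1-\alpha
\end{align*}
hold simultaneously for all $i$ and $\alpha$, with no union bound needed. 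The lower bound in the theorem is thus immediate on $E$.

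For the upper bound, I would reduce the claim to the set inclusion $C^{l,\alpha}_i \subseteq C^{u,\alpha^\ast_i(\alpha)}_i$, valid on $E$. Once this is proved, monotonicity of measure together with the upper bound of the previous display applied at level $\alpha^\ast_i(\alpha)$ yields $\P(Z \in C^{l,\alpha}_i) \leq \P(Z \in C^{u,\alpha^\ast_i(\alpha)}_i) \leq 1-\alpha^\ast_i(\alpha)$, which is exactly what is required. Unfolding the intervals, the inclusion amounts to the two endpoint inequalities
\begin{align*}
	\hat Q^-_i\bigl(\alpha^\ast_i/2 + g_i\bigr) \leq \hat Q_i\bigl(\alpha/2 - g_i\bigr), \qquad \hat Q_i\bigl(1-\alpha^\ast_i/2 - g_i\bigr) \geq \hat Q^-_i\bigl(1-\alpha/2 + g_i\bigr).
\end{align*}

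To verify these, I would exploit the defining suprema of $\alpha^l_i$ and $\alpha^u_i$. The sets in those definitions are downward closed in $[0,1]$ by monotonicity of $\hat Q_i,\hat Q^-_i$ in their arguments, so the strict inequalities in the definitions hold for every $\tilde\alpha$ strictly below the respective supremum. Since the empirical distribution is a step function, $\hat Q^-_i$ is left-continuous and $\hat Q_i$ is right-continuous in $p$; taking the limit $\tilde\alpha \nearrow \alpha^l_i$ (respectively $\alpha^u_i$) upgrades each strict inequality into the corresponding weak inequality at the supremum itself. Since $\alpha^\ast_i(\alpha) \leq \alpha^l_i$ and $\alpha^\ast_i(\alpha) \leq \alpha^u_i$, one further application of monotonicity in $p$ yields the two endpoint inequalities displayed above. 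The edge case in which the defining sets are empty, so that $\alpha^\ast_i(\alpha) = 0$, gives the trivial bound $1 \geq \P(Z \in C^{l,\alpha}_i)$ and needs no further argument.

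The main obstacle is precisely this passage from the strict inequality in the definitions of $\alpha^l_i,\alpha^u_i$ to the weak inequality at the supremum; everything else is a bookkeeping exercise. Once the one-sided continuity of the empirical quantile functions is made explicit, the argument closes cleanly, and the uniformity in $(i,\alpha)$ is inherited directly from the uniform nature of \cref{eq:quantile0}, so that the exceptional event of measure at most $\delta$ serves simultaneously for the lower bound, the upper bound, and the subset relation used to transfer one into the other.
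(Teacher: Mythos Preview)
Your proposal is correct and follows the same approach as the paper: the paper simply asserts the inclusion $C_i^{l,\alpha} \subset C_i^{u,\alpha^\ast_i(\alpha)}$ and combines it with the uniform-in-$\alpha$ statement \cref{eq:quantile0}, whereas you additionally spell out the passage from the strict inequalities in the definitions of $\alpha^l_i,\alpha^u_i$ to weak inequalities at the supremum via one-sided continuity of the empirical quantile functions. One small remark: the set inclusion itself is a deterministic fact about the empirical quantile functions and holds everywhere, not just on $E$; the event $E$ is only needed when you invoke the coverage bounds for $C^{l,\alpha}_i$ and $C^{u,\alpha}_i$.
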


This is a very useful estimate, as we can guarantee the lower bound with probability $1-\delta$, the upper bound we can determine using the data. 

It is also clear that studying the definition, we see that $\alpha_i^\ast \to \alpha$ as $i \to \infty$ since the empirical quantiles converge a.s.~to the true quantile and $g_i \to 0$.

\subsection{Conformal prediction and confidence sequences} \label{sec:conformal}
Let us now show how to use confidence sequences in the context of an inductive prediction setting using ideas from conformal prediction. Namely, assume that we have a score function $S(x,y)$ which measures the error we make. Typical choice is
\begin{align*}
	S(x,y) = \frac{|f(x)-y|}{\sigma(x)}
\end{align*}
where $f$ is the function making the prediction and $\sigma$ could for instance be the square root of an estimator of the conditional variance. For a specific example see the kernel variance estimator in \cref{appendix:d.nadaraya.watson.variance}.

Now, for an i.i.d~sequence of calibration points we will create the random variable $Z_i = S(X_i,Y_i)$ and for this we can create the confidence sequence
\begin{align}\label{eq:conformal1}
	\P(\forall i: \P(Z \in C_i(\alpha) \mid Z_1,\ldots,Z_{i}) \geq 1-\alpha) \geq 1-\delta.
\end{align}
This confidence sequence $C_i(\alpha)$ can be created using for instance \cref{thm:prediction}.
Now note that $C_i(\alpha) = [L_i(\alpha),U_i(\alpha)]$ is an interval, therefore if we define
the score function to be the signed version of $S$ above, i.e.
\begin{align*}
	S(x,y) = \frac{y-f(x)}{\sigma(x)}
\end{align*}
then we can define the scaled and shifted interval
\begin{align*}
	D_i(\alpha) = [f(X) + \sigma(X) L_i(\alpha), f(X) + \sigma(X) U_i(\alpha)],
\end{align*}
and \cref{eq:conformal1} becomes
\begin{align*}
	\P(\forall i: \P(Y \in D_i(\alpha) \mid (X_1,Y_1),\ldots,(X_i,Y_i)) \geq 1-\alpha) \geq 1-\delta
\end{align*}
The fact that our score function is signed allows us to treat problems in where the conditional distribution of $Y \mid X$ is skew, see \cref{fig:prediction_interval1D_skew}.

\subsection{IID bounded residual}

Let us see another application of these intervals in the context where we cannot guarantee identical distribution of the residual. Assume that we have a heteroscedastic problem, i.e.
\begin{align*}
	Y = f(X) + Z
\end{align*}
and assume that $Z$ is mean zero and independent of $X$. Then $f$ is the regression function. Assume that we have used data to estimate $f$, call the estimate $\hat f$, and assume that we managed to achieve $\|f - \hat f\|_{L^\infty} < \eps$, which can be achieved with high probability if for instance $f$ is Lipschitz continuous. We are interested in this situation to create prediction intervals around $Y$ using $\hat f$, but where we take deterministic $x$. 

Consider now a sequence of given points $x_1,\ldots,x_n$, then $Y_i = f(x_i) + Z_i$. The residual on the other hand is $R_i = Y_i - \hat f(x_i)$, and since $\hat f \neq f$ this is only a sequence of independent random variables, it is not identically distributed. We wish to create a prediction interval that contains $R_i$ with high probability. Therefore, we introduce the following concept
\begin{definition}
	Let $X_1,\ldots$ be an independent sequence of random variables and assume that there exists constants $c_1,\delta_1,c_2,\delta_2$ and an IID sequence of random variables $Y_1,\ldots$ (a coupling) such that
	\begin{align*}
		c_1 Y_i + \delta_1 \leq X_i \leq c_2 Y_i + \delta_2.
	\end{align*}
	Then we say that $X_1,\ldots$ is an \emph{IID bounded sequence}.
\end{definition}

The above definition allows us to compare the distribution functions
\begin{lemma} \label{lem:comparison:essIID}
	Let $X_1,\ldots$ be an IID bounded sequence with constants $c_1$, $\delta_1$, $c_2$, $\delta_2$ and IID bounding sequence $Y_1,\ldots$. Then
	\begin{align*}
		F_{X_i} (c_1 x + \delta_1) \leq F_{Y_i} (x) \leq F_{X_i} (c_2 x + \delta_2)
	\end{align*}
	Define the empirical distribution functions as $\hat F_{Y_n}(x) = \frac{1}{n} \sum_{i=1}^n \mathbb{I}_{Y_i \leq x}$ and
	$\hat F_{X_n}(x) = \frac{1}{n} \sum_{i=1}^n \mathbb{I}_{X_i \leq x}$, then
	\begin{align*}
		\hat F_{X_n} (c_1 x + \delta_1) \leq \hat F_{Y_n} (x) \leq \hat F_{X_n} (c_2 x + \delta_2)
	\end{align*}
\end{lemma}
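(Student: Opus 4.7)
The plan is to derive both inequalities from the pointwise coupling $c_1 Y_i + \delta_1 \leq X_i \leq c_2 Y_i + \delta_2$ by converting it into set inclusions among the events $\{X_i \leq \cdot\}$ and $\{Y_i \leq \cdot\}$, and then reading off the distributional statement by taking probabilities and the empirical statement by summing indicators.

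For the distributional inequality, I would first use the lower bound $c_1 Y_i + \delta_1 \leq X_i$. If $X_i \leq c_1 x + \delta_1$ on a given sample point, then $c_1 Y_i + \delta_1 \leq c_1 x + \delta_1$; assuming $c_1 > 0$ (which is implicit in calling the sequence IID bounded, else the inequality direction would flip), this gives $Y_i \leq x$. Hence $\{X_i \leq c_1 x + \delta_1\} \subseteq \{Y_i \leq x\}$ and taking probabilities yields $F_{X_i}(c_1 x + \delta_1) \leq F_{Y_i}(x)$. For the other side, the upper bound $X_i \leq c_2 Y_i + \delta_2$ shows that $\{Y_i \leq x\} \subseteq \{X_i \leq c_2 x + \delta_2\}$ (again using $c_2 > 0$), giving $F_{Y_i}(x) \leq F_{X_i}(c_2 x + \delta_2)$.

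For the empirical version, the exact same set inclusions, now read pointwise on the underlying probability space, translate into the indicator inequalities
\begin{align*}
  \mathbb{I}_{X_i \leq c_1 x + \delta_1} \leq \mathbb{I}_{Y_i \leq x} \leq \mathbb{I}_{X_i \leq c_2 x + \delta_2}
\end{align*}
for every $i$ and every sample point. Averaging these over $i = 1, \ldots, n$ immediately yields
$\hat F_{X_n}(c_1 x + \delta_1) \leq \hat F_{Y_n}(x) \leq \hat F_{X_n}(c_2 x + \delta_2)$.

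There is no real obstacle here: the lemma is essentially a direct translation of the coupling into distribution-function language. The only subtlety is to record the implicit sign assumption $c_1, c_2 > 0$ so that dividing the scaling constants preserves the inequalities; if the definition allowed $c_1, c_2 \leq 0$ the statement would require suitable sign adjustments.
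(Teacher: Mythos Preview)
Your argument is correct and is essentially the same as the paper's: the paper also reads the coupling $c_1 Y_i + \delta_1 \leq X_i \leq c_2 Y_i + \delta_2$ directly as an inclusion of events and takes probabilities (writing out only the upper inequality $F_{Y_i}(x)\le F_{X_i}(c_2 x+\delta_2)$ and remarking that the empirical case is identical). Your explicit note that one needs $c_1,c_2>0$ for the division step is a useful addition that the paper leaves implicit.
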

\begin{proof}
	It is immediate from the definition of IID bounded sequences, that
	\begin{align*}
		F_{Y_i} (x) = \P(Y_i \leq x) \leq \P((X_i - \delta_2)/c_2 \leq x) =
		F_{X_i} (c_2 x + \delta_2).
	\end{align*}
	The result for the empirical distribution function follows in the same way.
\end{proof}

\begin{lemma} \label{lem:quantile:essentialIID}
	Let the sequence $X_i$ be as in \cref{lem:comparison:essIID}. Then define the quantile function $Q_{X_i}(p) = \sup\{x : F_{X_i}(x) \leq p\}$ and the lower quantile function $Q^{-}_{X_i}(p) = \sup\{x : F_{X_i}(x) < p\}$. We define $\hat Q_{X_n}$ and $\hat Q^{-}_{X_n}$ similarly.
	Then, for $\alpha \in (0,1)$ with $n\alpha \in \N$, it holds
	\begin{align*}
		c_1 \hat Q_{Y_n}(\alpha) + \delta_1 \leq \hat Q_{X_n}(\alpha) \leq c_2 \hat Q_{Y_n}(\alpha)+\delta_2,
	\end{align*}
	and
	\begin{align*}
		c_1 \hat Q^-_{Y_n}(\alpha) + \delta_1 \leq \hat Q^-_{X_n}(\alpha) \leq c_2 \hat Q^-_{Y_n}(\alpha)+\delta_2.
	\end{align*}
\end{lemma}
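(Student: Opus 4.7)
The plan is to deduce both quantile inequalities directly from the empirical distribution function bounds already established in \cref{lem:comparison:essIID}, using nothing more than the defining supremum formulas for the quantile functions. The key observation is that, for affine maps with positive slope $c$, the operation of taking the supremum of a set commutes with the affine transformation $x \mapsto cx + d$, so an inequality on CDFs after an affine substitution translates cleanly into an inequality on the corresponding quantiles.

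Concretely, for the upper bound $\hat Q_{X_n}(\alpha) \leq c_2 \hat Q_{Y_n}(\alpha) + \delta_2$, I would pick any $z$ with $\hat F_{X_n}(z) \leq \alpha$, write $z = c_2 x + \delta_2$, and apply the right half of \cref{lem:comparison:essIID} to obtain $\hat F_{Y_n}(x) \leq \hat F_{X_n}(c_2 x + \delta_2) = \hat F_{X_n}(z) \leq \alpha$. Thus $x \leq \hat Q_{Y_n}(\alpha)$, and hence $z \leq c_2 \hat Q_{Y_n}(\alpha) + \delta_2$; taking the supremum over all admissible $z$ gives the claim. For the lower bound $\hat Q_{X_n}(\alpha) \geq c_1 \hat Q_{Y_n}(\alpha) + \delta_1$ I would argue symmetrically: for any $x$ with $\hat F_{Y_n}(x) \leq \alpha$, the left half of \cref{lem:comparison:essIID} yields $\hat F_{X_n}(c_1 x + \delta_1) \leq \alpha$, so $c_1 x + \delta_1 \leq \hat Q_{X_n}(\alpha)$; then taking the supremum over such $x$ and using positivity of $c_1$ to pull the affine map through the sup yields the inequality.

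The argument for the lower quantile versions $\hat Q^-_{X_n}, \hat Q^-_{Y_n}$ is structurally identical, with the defining $\leq \alpha$ replaced by $< \alpha$; the pointwise CDF inequalities from \cref{lem:comparison:essIID} preserve strict inequalities, so the same chain of implications goes through verbatim.

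The only mildly delicate point—and the closest thing here to an obstacle—is the role of the assumption $n\alpha \in \N$. Since the empirical CDFs take values in $\{0, 1/n, \ldots, 1\}$, this assumption ensures $\alpha$ coincides with one of the jump levels, so that the sets $\{x : \hat F(x) \leq \alpha\}$ and $\{x : \hat F(x) < \alpha\}$ differ only at isolated boundary points and their suprema relate predictably, preventing any off-by-one mismatch between $\hat Q$ and $\hat Q^-$ after the affine substitution. Beyond this bookkeeping, and the implicit requirement $c_1, c_2 > 0$ already used in \cref{lem:comparison:essIID}, the lemma is a short transliteration of CDF comparisons into the quantile domain with no further difficulty expected.
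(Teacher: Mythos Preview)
Your argument is correct and takes a somewhat different route from the paper's. The paper evaluates the empirical CDF \emph{at} the empirical quantile, using $n\alpha \in \N$ to write $\hat F_{Y_n}(\hat Q_{Y_n}(\alpha)) = \alpha + 1/n > \alpha$, and then invokes the CDF comparison from \cref{lem:comparison:essIID} to conclude that the point $c_2 \hat Q_{Y_n}(\alpha)+\delta_2$ falls outside the level set $\{x:\hat F_{X_n}(x)\le\alpha\}$, hence bounds its supremum from above; the other three inequalities are handled by the same evaluate-then-exclude pattern. You instead establish an inclusion (after the affine change of variable) between the level sets themselves and take suprema directly, which is a touch more elementary. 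One consequence worth noting: your argument as written never actually uses the hypothesis $n\alpha \in \N$; the set-inclusion reasoning goes through for every $\alpha\in(0,1)$, so your closing paragraph about the ``mildly delicate'' role of that assumption is misplaced for your own proof --- that integrality condition is what makes the paper's exact equality $\hat F_{Y_n}(\hat Q_{Y_n}(\alpha))=\alpha+1/n$ hold, but your route bypasses that evaluation entirely.
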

\begin{proof}
	Let $n\alpha \in \N$ and $\alpha \in (0,1)$, then by \cref{lem:comparison:essIID}
	\begin{align*}
		\alpha+1/n = \hat F_{Y_n} (\hat Q_{Y_n}(\alpha)) \leq \hat F_{X_n}(c_2 \hat Q_{Y_n}(\alpha)+\delta_2)
	\end{align*}
	this implies that $c_2 \hat Q_{Y_n}(\alpha)+\delta_2 \not \in \{x : \hat F_{X_n}(x) \leq \alpha\}$, and as such we obtain
	\begin{align*}
		\hat Q_{X_n}(\alpha) \leq c_2 \hat Q_{Y_n}(\alpha)+\delta_2.
	\end{align*}
	On the other hand we have
	\begin{align*}
		\alpha+1/n = \hat F_{X_n} (\hat Q_{X_n}(\alpha)) \leq \hat F_{Y_n}\left (\frac{\hat Q_{X_n}(\alpha)-\delta_1 }{c_1}\right)
	\end{align*}
	which in the same way implies that
	\begin{align*}
		c_1 \hat Q_{Y_n}(\alpha) + \delta_1 \leq \hat Q_{X_n}(\alpha).
	\end{align*}
	Let us now consider the lower quantile function, we first note that
	\begin{align*}
		\alpha = \hat F_{Y_n} (\hat Q^-_{Y_n}(\alpha)) \leq \hat F_{X_n}(c_2 \hat Q^-_{Y_n}(\alpha)+\delta_2)
	\end{align*}
	which again implies that $c_2 \hat Q^-_{Y_n}(\alpha)+\delta_2 \not \in \{x : \hat F_{X_n}(x) < \alpha\}$ hence
	\begin{align*}
		\hat Q^-_{X_n}(\alpha) \leq c_2 \hat Q^-_{Y_n}(\alpha)+\delta_2,
	\end{align*}
	the lower bound now follows in the same way as above.
\end{proof}

Now, for IID bounded sequences we can construct prediction intervals similarly to IID sequences but at a cost.

\begin{theorem} \label{thm:essentialIID}
	Define the interval
	\begin{align*}
		C_n = \left [\frac{\hat Q_{X_n}(\frac{\alpha}{2}-g_i)-\delta_2}{c_2},\frac{\hat Q_{Y_n}^-(1-\frac{\alpha}{2}+g_i)-\delta_1}{c_1} \right ]
	\end{align*}
	then
	\begin{align*}
		\P(\P(Y_{n+1} \in C_n \mid X_1,\ldots,X_n) \geq 1-\alpha) \geq 1-\delta.
	\end{align*}
	Correspondingly, by enlarging $C_n = (l,r)$ as follows
	\begin{align*}
		\tilde C_n = \left ( c_1 l + \delta_1, c_2 r + \delta_2 \right )
	\end{align*}
	we obtain a similar statement for $X_{n+1}$
	\begin{align*}
		\P(\P(X_{n+1} \in \tilde C_n \mid X_1,\ldots,X_n) \geq 1-\alpha) \geq 1-\delta.
	\end{align*}
\end{theorem}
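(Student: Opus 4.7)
The plan is to route everything through the IID bounding sequence $Y_1, Y_2, \ldots$: since \cref{thm:1} applies directly to the $Y_i$, I use the coupling inequalities of \cref{lem:quantile:essentialIID} to transfer the resulting quantile confidence bounds onto the observable empirical quantiles of $X$. Concretely, applying \cref{thm:1} to $Y_1, Y_2, \ldots$ yields an event $\Omega^\ast$ of probability at least $1-\delta$ on which, for every $i \in \mathbb{N}$ and every $p \in (0,1)$, one has $\hat Q_{Y_i}(p - g_i) \leq Q_Y(p) \leq \hat Q^-_{Y_i}(p + g_i)$. Specialising at $p = \alpha/2$ and $p = 1-\alpha/2$ gives the two one-sided confidence bounds on the true quantiles of $Y$ that drive the rest of the argument.

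Next I apply \cref{lem:quantile:essentialIID}. Rearranging $\hat Q_{X_n}(\alpha) \leq c_2 \hat Q_{Y_n}(\alpha) + \delta_2$ yields $(\hat Q_{X_n}(\alpha) - \delta_2)/c_2 \leq \hat Q_{Y_n}(\alpha)$, and similarly from the lower-quantile statement I get $\hat Q^-_{Y_n}(\alpha) \leq (\hat Q^-_{X_n}(\alpha) - \delta_1)/c_1$. Chaining these with the one-sided bounds above at $p = \alpha/2$ and $p = 1-\alpha/2$ respectively produces, on $\Omega^\ast$,
\begin{align*}
	\frac{\hat Q_{X_n}(\alpha/2 - g_n) - \delta_2}{c_2} \leq Q_Y(\alpha/2), \qquad Q_Y(1-\alpha/2) \leq \frac{\hat Q^-_{X_n}(1-\alpha/2 + g_n) - \delta_1}{c_1}.
\end{align*}
Reading the right endpoint of $C_n$ in terms of $\hat Q^-_{X_n}$ (the observable form that makes the interval data-driven), this is exactly the inclusion $[Q_Y(\alpha/2), Q_Y^-(1-\alpha/2)] \subseteq C_n$.

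Under the natural coupling, $Y_{n+1}$ is independent of $(X_1,\ldots,X_n)$ and marginally distributed as $Y_1$, so the standard quantile inequalities $F_Y(Q_Y(\alpha/2)^-) \leq \alpha/2$ and $F_Y(Q_Y^-(1-\alpha/2)) \geq 1-\alpha/2$ yield $\P(Y_{n+1} \in C_n \mid X_1,\ldots,X_n) \geq 1-\alpha$ on $\Omega^\ast$, which establishes the first assertion. For the $\tilde C_n$ statement I use the coupling deterministically: with $c_1, c_2 > 0$, the inequalities $c_1 Y_{n+1} + \delta_1 \leq X_{n+1} \leq c_2 Y_{n+1} + \delta_2$ imply the pointwise inclusion $\{Y_{n+1} \in C_n\} \subseteq \{X_{n+1} \in \tilde C_n\}$, so $\P(X_{n+1} \in \tilde C_n \mid X_1,\ldots,X_n) \geq \P(Y_{n+1} \in C_n \mid X_1,\ldots,X_n) \geq 1-\alpha$ on $\Omega^\ast$.

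The hard part is a mild compatibility issue: \cref{lem:quantile:essentialIID} is stated for levels $\alpha$ with $n\alpha \in \mathbb{N}$, whereas the perturbed levels $\alpha/2 \pm g_n$ appearing in the good event from \cref{thm:1} are not multiples of $1/n$ in general. I expect that either a minor extension of the lemma (its proof goes through with modest adjustments using only monotonicity of $\hat F$ and appropriate one-sided limits at the quantile jumps) or a rounding argument in the direction consistent with the chain of inequalities will close this gap. Past this point the argument is a clean composition of the two quantile-transfer tools.
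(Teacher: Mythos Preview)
Your proof is correct and follows essentially the same route as the paper: apply the quantile confidence sequence (via \cref{thm:1}, equivalently the lower-bound half of \cref{thm:prediction}) to the IID bounding sequence $Y_i$, then invoke \cref{lem:quantile:essentialIID} to enlarge the resulting $Y$-based prediction interval into the observable $X$-based interval $C_n$; for the second claim the paper's ``apply \cref{lem:quantile:essentialIID} again'' is exactly your pointwise coupling inclusion $\{Y_{n+1}\in C_n\}\subseteq\{X_{n+1}\in\tilde C_n\}$. The compatibility issue you flag about the hypothesis $n\alpha\in\mathbb{N}$ in \cref{lem:quantile:essentialIID} is real and is simply glossed over in the paper's proof as well.
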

\begin{proof}
	By \cref{thm:prediction} we have the prediction interval
	\begin{align*}
		[\hat Q_{Y_n}(\alpha/2-g_i),\hat Q_{Y_n}^-(1-\alpha/2+g_i)].
	\end{align*}
	This interval will with probability $\delta$ have coverage greater than $1-\alpha$. Now from \cref{lem:quantile:essentialIID} we see that we can cover the prediction interval above as follows
	\begin{align*}
		\left [\hat Q_{Y_n}(\frac{\alpha}{2}-g_i),\hat Q_{Y_n}^-(1-\frac{\alpha}{2}+g_i) \right ] \subset \left [\frac{\hat Q_{X_n}(\frac{\alpha}{2}-g_i)-\delta_2}{c_2},\frac{\hat Q_{Y_n}^-(1-\frac{\alpha}{2}+g_i)-\delta_1}{c_1} \right ].
	\end{align*}
	To get the last statement, we simply apply \cref{lem:quantile:essentialIID} again.
\end{proof}

\begin{example}
	For the day index $t$ we sell $X_t$ number of items, and $\E[X_t] = \lambda(t)$. We assume that we have used previous years data to estimate the rate function $\lambda(t)$ which we can with high probability do with uniform error $\eps$.

	Our observations are thus
	\begin{align*}
		X_t = \lambda(t)+Z_t
	\end{align*}
	where $Z_t$ are i.i.d.

	Our estimated rate function is $\hat \lambda(t)$ and we know that $|\lambda(t) - \hat \lambda(t)| \leq \eps$. Let us construct the sequence of errors defined as
	\begin{align*}
		E_t = X_t - \hat \lambda(t) = Z_t + \lambda(t)-\hat \lambda(t) =: Z_t + \eps_t,
	\end{align*}
	with $|\eps_t| \leq \eps$.

	We are now exactly in the setting where we can apply \cref{thm:essentialIID}.
\end{example}

\section{Sharper bounds}

In this section we will use simulation in order to sharpen the bound in \cref{thm:1}. This is especially the case for smaller values of $t$, as can be seen in \cref{fig:motivation}, where we have the correct shape but slightly inflated values. Instead of trying to come up with a smarter way to estimate the probabilities involved we can simulate (at least for finite values).

\begin{figure}
	\begin{center}
		\includegraphics[scale=0.5]{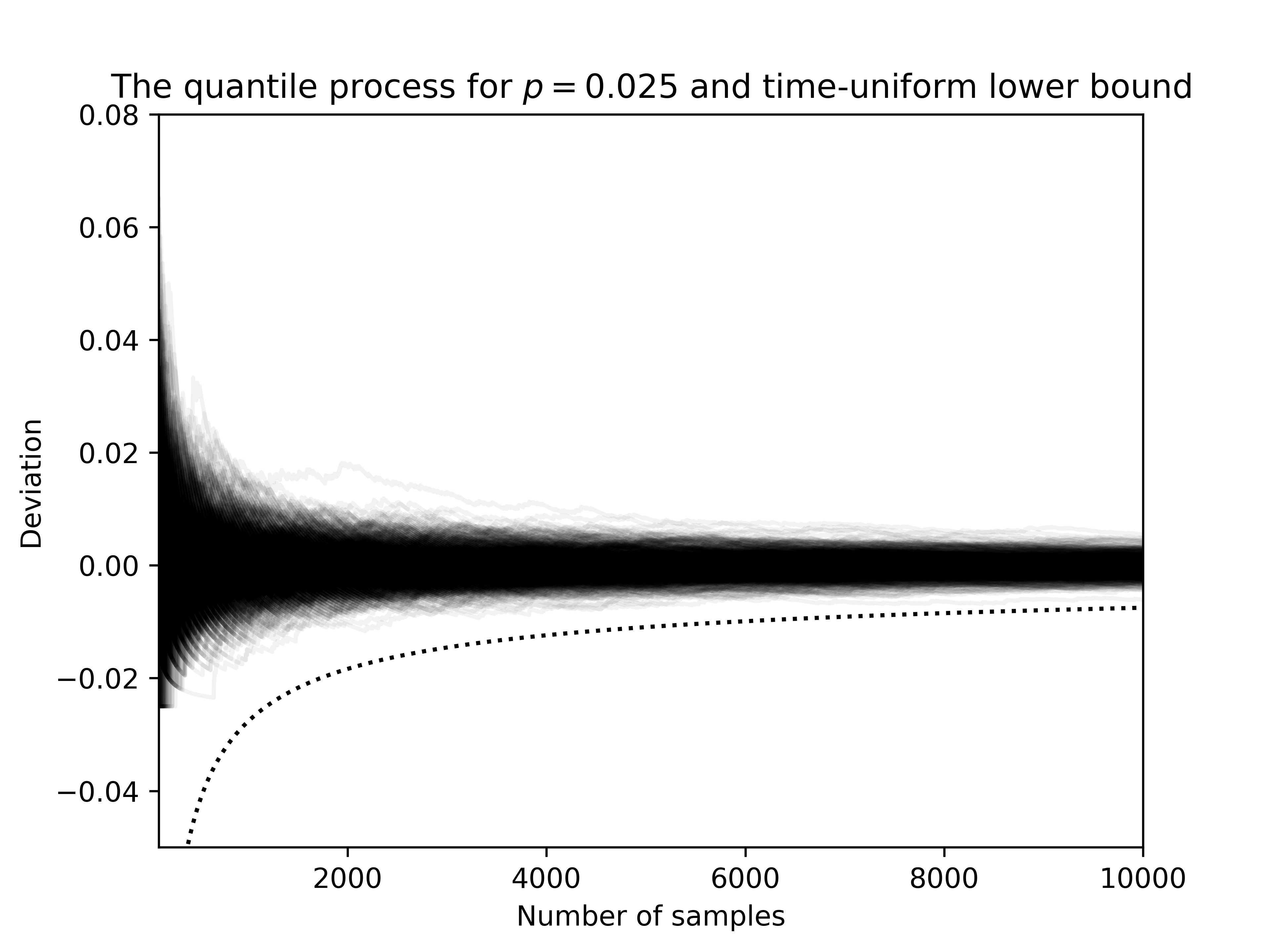}
	\end{center}
	\caption{Some trajectories of the quantile process and the time-uniform bound provided in \cite{HowRam}.}
	\label{fig:motivation}
\end{figure}

To begin, we recall the notation of \cite{HowRam} and define
\begin{align}\label{eq:empirical_process}
	S_t = \sum_{i=1}^t (\mathbb{I}_{X_i \leq Q(p)}-p)
\end{align}
where $X_i \sim F$ is an i.i.d.~sequence and $Q(p)$ is the $p$ quantile w.r.t.~$F$. We note that by definition $\hat F_t(Q(p)) = S_t/t$, as such, if we can find a function $-p < f(t,p,\delta) < 0$ such that
\begin{align} \label{eq:empirical_process_bound}
	\P(\exists t:\, S_t/t < f(t,p,\delta)) \leq \delta
\end{align}
it would imply by the definition of the quantile function, that
\begin{align}\label{eq:empirical_process_implies_quantile}
	\P(\exists t:\, Q(p) < \hat Q_t(p+f(t,p,\delta))) \leq \delta.
\end{align}
Thus, if we wish to have a lower bound on a quantile (think small quantile), we can study the Bernoulli process $S_t$. In order to explain the simulation setup we begin by requiring that we have an initial waiting time $m$ and a cutoff-time $M$ (for the simulation), and consider
\begin{multline*}
	\P(\exists t:\, S_t/t < f(t,p,\delta)) \leq \P(\exists t \in [m,M]:\, S_t/t < f(t,p,\delta))\\
	 + \P(\exists t > M:\, S_t/t < f(t,p,\delta))
\end{multline*}
The first term we will handle via simulation and the second term will be analytically bounded using the following quantile specific version of \cref{thm:1}, which can be found in~\cite{HowRam}:
\begin{theorem} \label{thm:HowRamSingleQuantileDecay}
	For any $p,\delta \in (0,1)$, $s,\eta > 1$, and $M \geq 1$, we have
	\begin{align*}
		\P(\exists t \geq M: Q(p) < \hat Q_t(p-g(t,M,p,\delta))) \leq \delta
	\end{align*}
	with
	\begin{align*}
		g(t,M,p,\delta) = \sqrt{k_1^2 p (1-p) t l(t) + k_2^2 c_p^2 l^2(t)} + c_p k_2 l(t)
	\end{align*}
	where
	\begin{align*}
		\begin{cases}
			l(t) = s \log(\log(\eta t/M)) + \log(2 \zeta(s)/(\delta \log^s \eta)) \\
			k_1 = (\eta^{1/4} + \eta^{-1/4})/\sqrt{2} \\
			k_2 = (\sqrt{\eta} + 1)/2 \\
			c_p = (1-2p)/3,
		\end{cases}
	\end{align*}
	where $\zeta$ is the Riemann zeta function.
\end{theorem}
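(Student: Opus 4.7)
The plan is to reduce the quantile inequality to a time-uniform deviation bound on the centered Bernoulli process and then obtain that bound via the line-crossing plus stitching framework of \cite{HowRam}. As already noted in the excerpt (cf.~\cref{eq:empirical_process_implies_quantile}), the event $Q(p) < \hat Q_t(p - g)$ is equivalent to $\hat F_t(Q(p)) \leq p - g$, hence to a lower tail event on $S_t$. Writing $W_i = p - \mathbb{I}_{X_i \leq Q(p)}$, the i.i.d.\ summands are mean zero, bounded in $[p-1, p]$, of variance $p(1-p)$, and satisfy the refined Bernstein cumulant bound
\begin{align*}
\log \E e^{\lambda W_i} \leq \frac{\lambda^2 p(1-p)/2}{1 - c_p \lambda}, \qquad \lambda \in [0, 1/c_p),
\end{align*}
with $c_p = (1 - 2p)/3$. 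Thus $-S_t = \sum_{i \leq t} W_i$ is a sub-gamma process with intrinsic variance $p(1-p)\,t$ and scale parameter $c_p$, and it suffices to bound $\P(\exists t \geq M : -S_t \geq g(t,M,p,\delta))$ by $\delta$.

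Next I would decompose $[M, \infty) = \bigcup_{k \geq 0}[M_k, M_{k+1})$ with $M_k = \eta^k M$ and apply, on each epoch, the line-crossing inequality of \cite{HowRam}. That inequality is built from Ville's inequality on the exponential supermartingale $\exp(\lambda (-S_t) - t\psi(\lambda))$, with $\psi$ the Bernstein cumulant above, by optimizing the affine boundary to best approximate the sub-gamma deviation shape $\sqrt{2p(1-p)\, t\, x} + c_p x$ uniformly on an interval of geometric ratio $\eta$. The output is, for every $x > 0$,
\begin{align*}
\P\Bigl(\exists t \in [M_k, M_{k+1}):\, -S_t \geq \sqrt{k_1^2 p(1-p)\, t\, x + k_2^2 c_p^2 x^2} + k_2 c_p x\Bigr) \leq e^{-x},
\end{align*}
where $k_1 = (\eta^{1/4} + \eta^{-1/4})/\sqrt{2}$ and $k_2 = (\sqrt{\eta} + 1)/2$ arise as the AM--GM-optimal constants comparing the endpoints of $[M_k, M_{k+1}]$ at the $\sqrt{t}$ and the $t$ scale, respectively.

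Finally I would stitch. Setting $x_k = s\log(k+1) + \log(2\zeta(s)/(\delta \log^s \eta))$ makes $\sum_{k \geq 0} e^{-x_k} \leq \delta$ via the Riemann-zeta identity $\sum_{k \geq 1} k^{-s} = \zeta(s)$, and identifying $k = \lfloor \log_\eta(t/M) \rfloor$ converts $s\log(k+1)$ into $s\log\log(\eta t/M)$, so that $x_k$ evaluated at $t$ becomes the function $l(t)$ of the theorem. Monotonicity of $l$ and of the sub-gamma boundary in $x$ then lets us replace $x = x_k$ by $x = l(t)$ in the per-epoch bound, and a union bound over $k$ yields the stated inequality with boundary $g(t, M, p, \delta)$. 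The main obstacle is the constant-tracking: verifying the refined Bernstein cumulant bound with $c_p = (1-2p)/3$ (a delicate Bennett-type calculation exploiting the third cumulant $\E W_i^3 = p(1-p)(1-2p)$ of centered Bernoullis), carrying out the line-crossing optimization so that the affine boundary exactly delivers $k_1$ and $k_2$ in their stated form, and choosing the weights $x_k$ so that the normalizer is precisely $2\zeta(s)/\log^s \eta$ after the geometric re-indexing.
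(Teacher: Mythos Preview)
The paper does not supply its own proof of this theorem; it is quoted verbatim as a result ``which can be found in~\cite{HowRam}'' and used as a black box for the tail $t > M$. There is therefore nothing in the paper to compare against.

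Your sketch is faithful to the Howard--Ramdas argument that underlies the cited result: reduce the quantile event to a lower-tail event on the centered Bernoulli sum $-S_t$ (this reduction the paper does spell out around \cref{eq:empirical_process}--\cref{eq:empirical_process_implies_quantile}), recognize $-S_t$ as a sub-gamma process with variance proxy $p(1-p)t$ and scale $c_p=(1-2p)/3$, apply the line-crossing inequality on geometric epochs $[M\eta^k, M\eta^{k+1})$ to produce the constants $k_1,k_2$, and stitch with polynomially decaying weights that sum via $\zeta(s)$ to $\delta$. That is exactly the machinery of~\cite{HowRam}, so your proposal is correct and aligned with the source the paper defers to. One small caveat worth flagging: as written in the statement here, $g(t,M,p,\delta)$ is the boundary for the \emph{sum} $-S_t$ (note the $t$ under the square root), whereas the quantile shift in $\hat Q_t(p - g)$ needs the boundary for the \emph{average} $-S_t/t$; your exposition mixes the two when you pass from the per-epoch bound on $-S_t$ to the final statement. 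This is a bookkeeping issue (and arguably a typo inherited from the paper's transcription), not a flaw in the strategy.
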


\begin{figure}
	\begin{center}
		\includegraphics[scale=0.5]{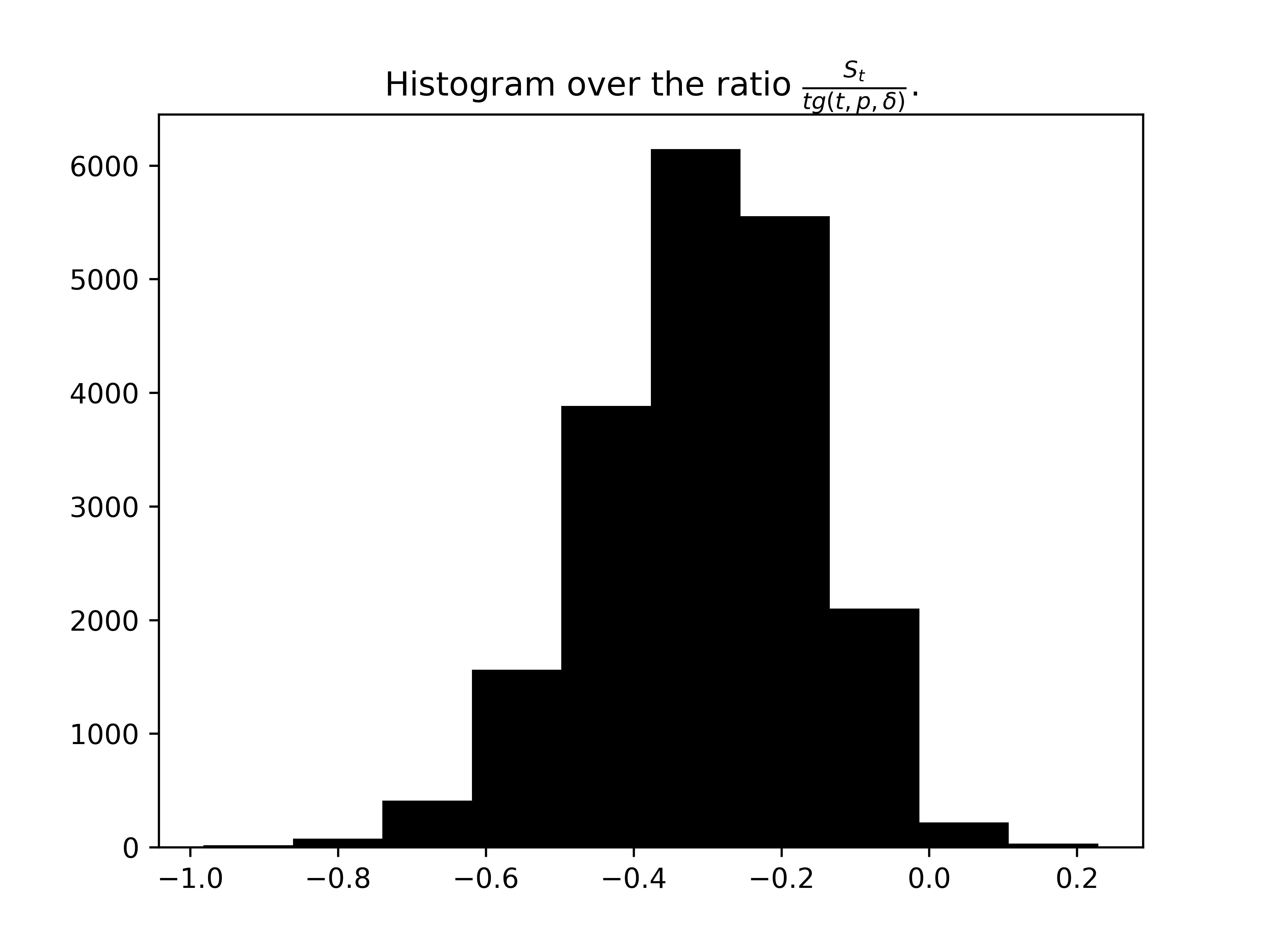}
	\end{center}
	\caption{Histogram over $c_Z$.} \label{fig:histogram_quantile_process}
\end{figure}

We now start by defining the random variable
\begin{align*}
	Z = \min_{t \in [m,M]} \frac{S_t}{t g(t,p,\delta/2)}
\end{align*}
where the function $g$ is given in \cref{thm:HowRamSingleQuantileDecay} with some fixed parameters, we have chosen the same as in \cite{HowRam}, namely $\eta = 2.04$ and $s=1.4$. The goal of the simulation is to estimate the $\delta/2$ quantile of $Z$ (see \cref{fig:histogram_quantile_process}). If we consider an empirical quantile $\hat Q_{t,Z} (\delta/2)$, formed using $t$ i.i.d.~copies of $Z$ which we denote $\mathbf{Z}_t$, then using \cref{lem:quantile_confidence} we get that for any $\tilde \delta \in (0,1)$,
\begin{align*}
	\P(\P(\tilde Z \leq \hat Q_{t,\mathbf{Z}_t}(\delta/2-\epsilon(t,\delta/2,\tilde \delta)) \mid \mathbf{Z}_t) \leq \delta/2) \geq 1-\tilde \delta
\end{align*}
where $\tilde Z$ is an independent copy of $Z$. Translating this back to the quantile process we obtain
\begin{align*}
	\P\bigg [\P\big (\exists t \in [m,M]:\, \hat F_t(Q(p)) < \hat Q_{t,\mathbf{Z}_t}\bigg (\frac{\delta}{2}-\epsilon\big (t,\frac{\delta}{2},\tilde \delta\big )\bigg ) g\big (t,p,\frac{\delta}{2}\big ) \mid \mathbf{Z}_t \big ) \leq \frac{\delta}{2} \bigg ] \geq 1-\tilde \delta.
\end{align*}
Thus, the introduction of the simulation step makes the quantile confidence statement of a $2$-parameter PAC type, as described for the prediction interval. If we later employ this in the same way as we did before but for the prediction interval we will have a $3$-parameter PAC statement. We can see how this sharpens the bounds in \cref{fig:simulation_quantile}.

\begin{figure}
	\begin{center}
		\includegraphics[scale=0.5]{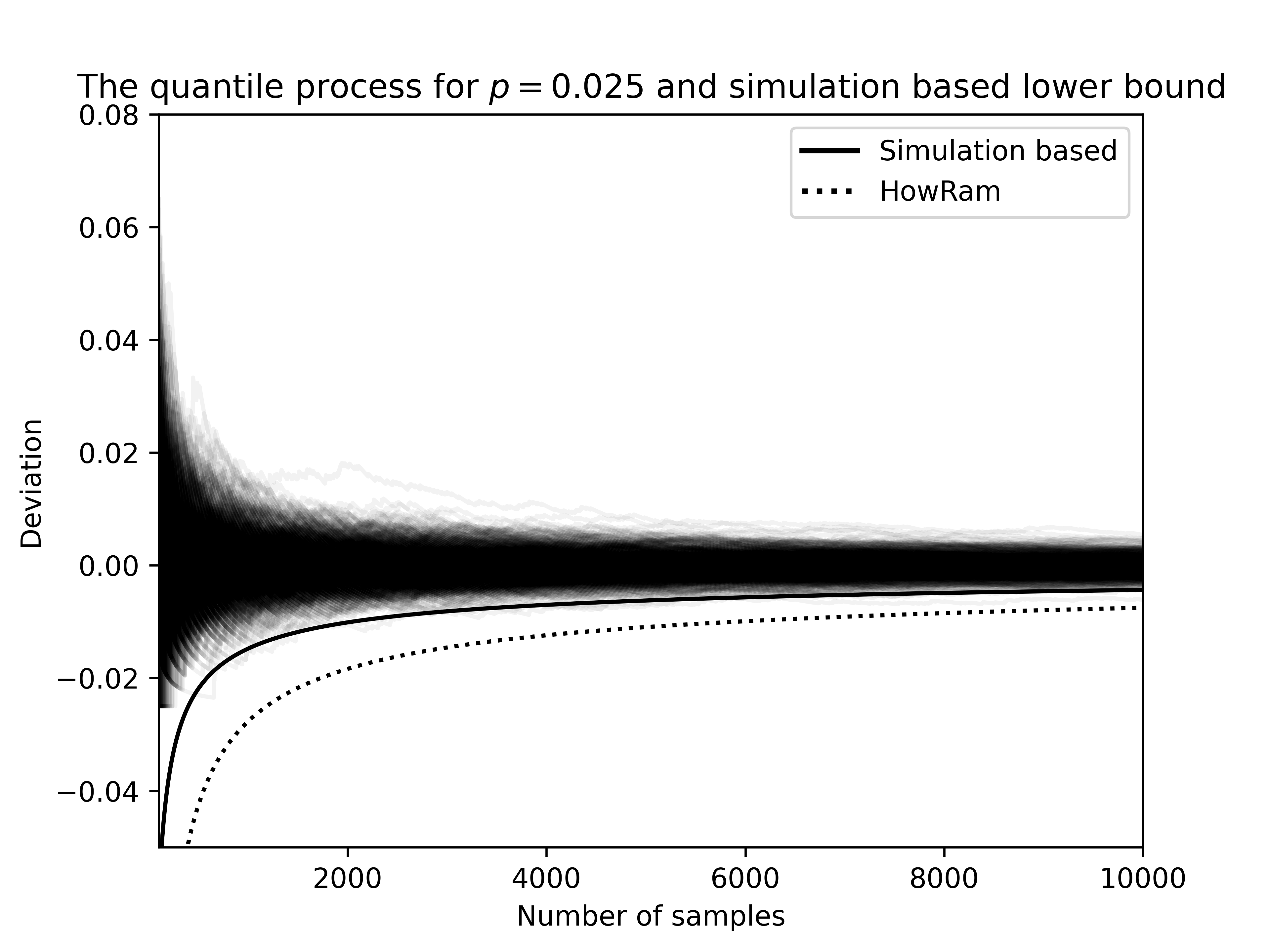}
	\end{center}
	\caption{Quantile process with a simulated quantile compared to the bound obtained by \cref{thm:HowRamSingleQuantileDecay}.} \label{fig:simulation_quantile}
\end{figure}

If we denote $c_Z = -\hat Q_{t,\mathbf{Z}_t}(\delta/2 - \epsilon(t,\delta/2,\tilde \delta))$ (think of $c_Z > 0$) and $g_t = g(t,p,\delta/2)$, then the above can be written more compactly as
\begin{align*}
	\P(\P(\exists t \in [m,M]:\, S_t/t < -c_Z g_t \mid \mathbf{Z}_t) \leq \delta/2) \geq 1-\tilde \delta.
\end{align*}
Define now the $\mathbf{Z}_t$ dependent event
\begin{align*}
	A = \{Q_Z(\delta/2) \geq \hat Q_{t,\mathbf{Z}_t}(\delta/2 - \epsilon(t,\delta/2,\tilde \delta))\},
\end{align*}
then, on $A$ we have
\begin{align*}
	\P(\exists t \in [m,M]:\, S_t/t < -c_Z g_t) \leq \delta/2.
\end{align*}
Using the union bound we can estimate on the event $A$
\begin{multline*}
	\P(\exists t \geq m: S_t/t < -\mathbb{I}_{t \leq M} c_Z g_t - \mathbb{I}_{t > M} g(t,M,p,\delta/2)) \\
	\leq \P(\exists t \in [m,M]: S_t/t < -c_Z g_t) + \P(\exists t > M: S_t/t < -g(t,M,p,\delta/2)) \leq \delta
\end{multline*}

\begin{example}
	For $p=0.05, \delta = 0.05,m=10, \tilde \delta = 0.0000001$, and $M=10000$ we obtain that $c_Z$ is roughly $0.6$ and that parameters $\eta,s$ can be chosen such that $g(M,M,p,\delta/2)$ is essentially the same size as $c_Z g_M$. 
\end{example}

Assembling everything we finally get,
\begin{align} \label{eq:2PAC_quantile}
	\P(\P(\exists t \geq m: \hat F_t(Q(p)) < p-\mathbb{I}_{t \leq M} c_Z g_t - \mathbb{I}_{t > M} g(t,M,p,\delta/2)\mid \mathbf{Z}_t) \leq \delta) \geq 1-\tilde \delta
\end{align} 

We can compare the bounds obtained in \cref{fig:simulation_stitch} for our particular choices.

\begin{figure}
	\begin{center}
		\includegraphics[scale=0.5]{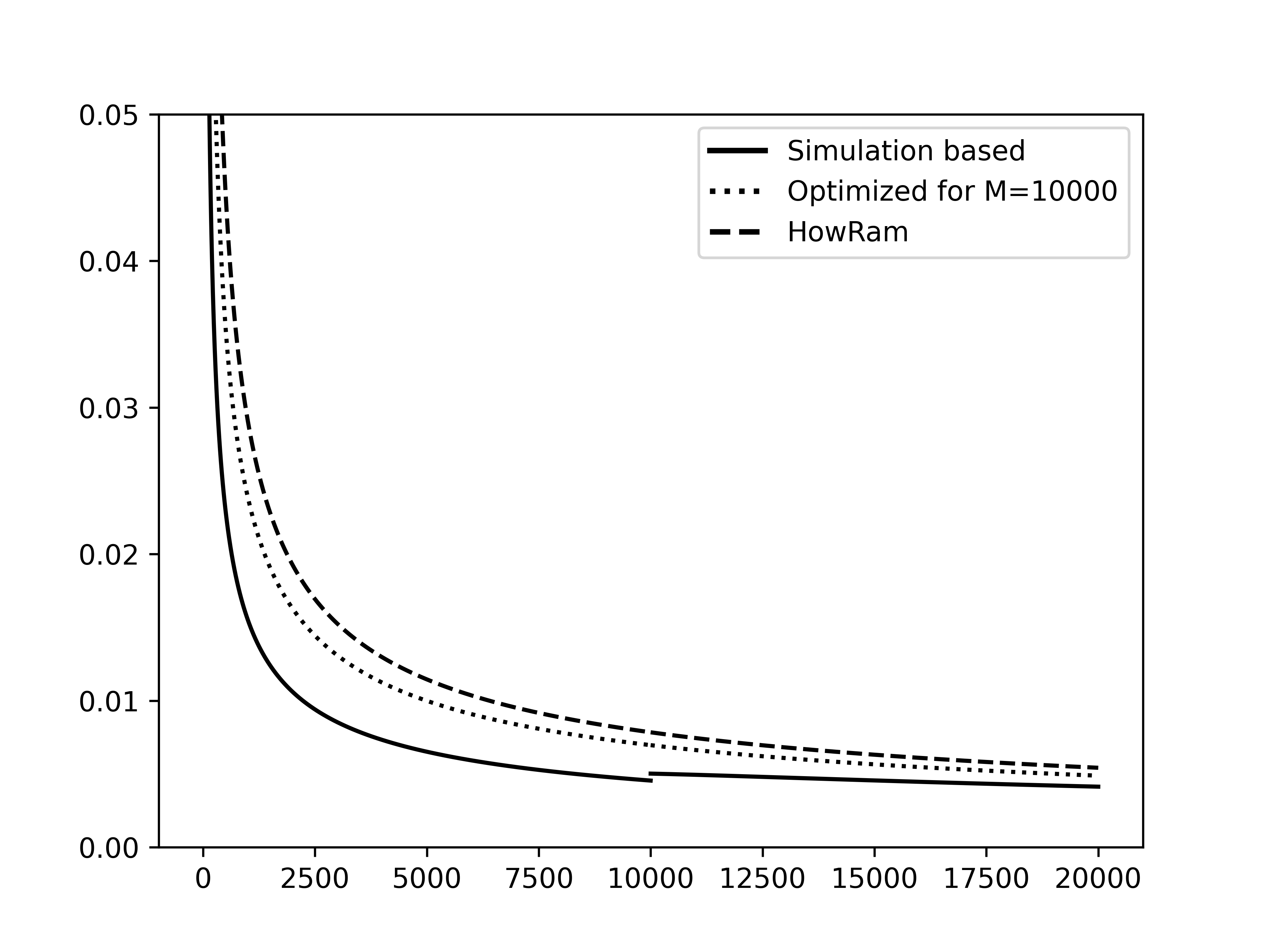}
	\end{center}
	\caption{Comparison of the different bounds.} \label{fig:simulation_stitch}
\end{figure}

\subsection{$3$-parameter PAC for prediction intervals} \label{sec:3PAC}
Combining the arguments leading to \cref{eq:2PAC_quantile} and \cref{eq:prediction_interval_lower} we can redefine the event $A$ to be (where we have defined $Z$ w.r.t. $p = \alpha/2$)
\begin{align*}
	A = \{Q_Z(\delta/4) \geq \hat Q_{t,\mathbf{Z}_t}(\delta/4 - g(t,\delta/4,\tilde \delta))\},
\end{align*}
then the event $A$ happens with probability $1-\tilde \delta$. On the event $A$ we then obtain that our prediction interval can be taken to be
\begin{align} \label{eq:3PAC_prediction}
	C^{l,\alpha}_i &= [\hat Q_i(\alpha/2-|c_Z| g_i),\hat Q_i^-(1-\alpha/2+|c_Z| g_i)]
\end{align} 
and on the event $A$ we have the following version of \cref{eq:prediction_interval_lower}
\begin{align} \label{eq:prediction_interval_lower_3PAC}
	\P(\forall i: \P(X \in C^{l,\alpha}_i) \geq 1-\alpha) \geq 1-\delta.
\end{align}

This constitutes a $3$-parameter PAC definition with parameters $(\tilde \delta,\delta,\alpha)$, where the first can be taken as small as we wish at the cost of our simulation.

\section{Application examples} 

In this section we will apply our $3$-parameter PAC formulation for prediction intervals in conjunction with what is developed in \cref{sec:conformal}. The first example is the following regression problem
\begin{align*}
	r(x) := \E[Y \mid X=x] = \sin(3x)
\end{align*}
\begin{align*}
	\sigma(x) := \sqrt{\mathbb{V}[Y \mid X=x]} = |\sin(10x)|.
\end{align*}
We assume that
\begin{align*}
	Y = r(X) + \sigma(X) \epsilon, \quad \epsilon \sim N(0,1),
\end{align*}
and that $X \sim \text{unif}([0,1])$. We estimate $r$ using the Nadaraya-Watson estimator with Gaussian kernel of bandwidth $1/500$ and estimate $\sigma(x)$ using again a Nadaraya-Watson estimator as in \cref{appendix:d.nadaraya.watson.variance}. We then construct the score function
\begin{align*}
	S = \frac{Y-r(X)}{\sigma(X)}
\end{align*}
and construct a prediction interval for $S$, then this is used to construct a prediction interval for $Y$. We can see the result in \cref{fig:prediction_interval1D}.

\begin{figure}
	\begin{center}
		\includegraphics[scale=0.5]{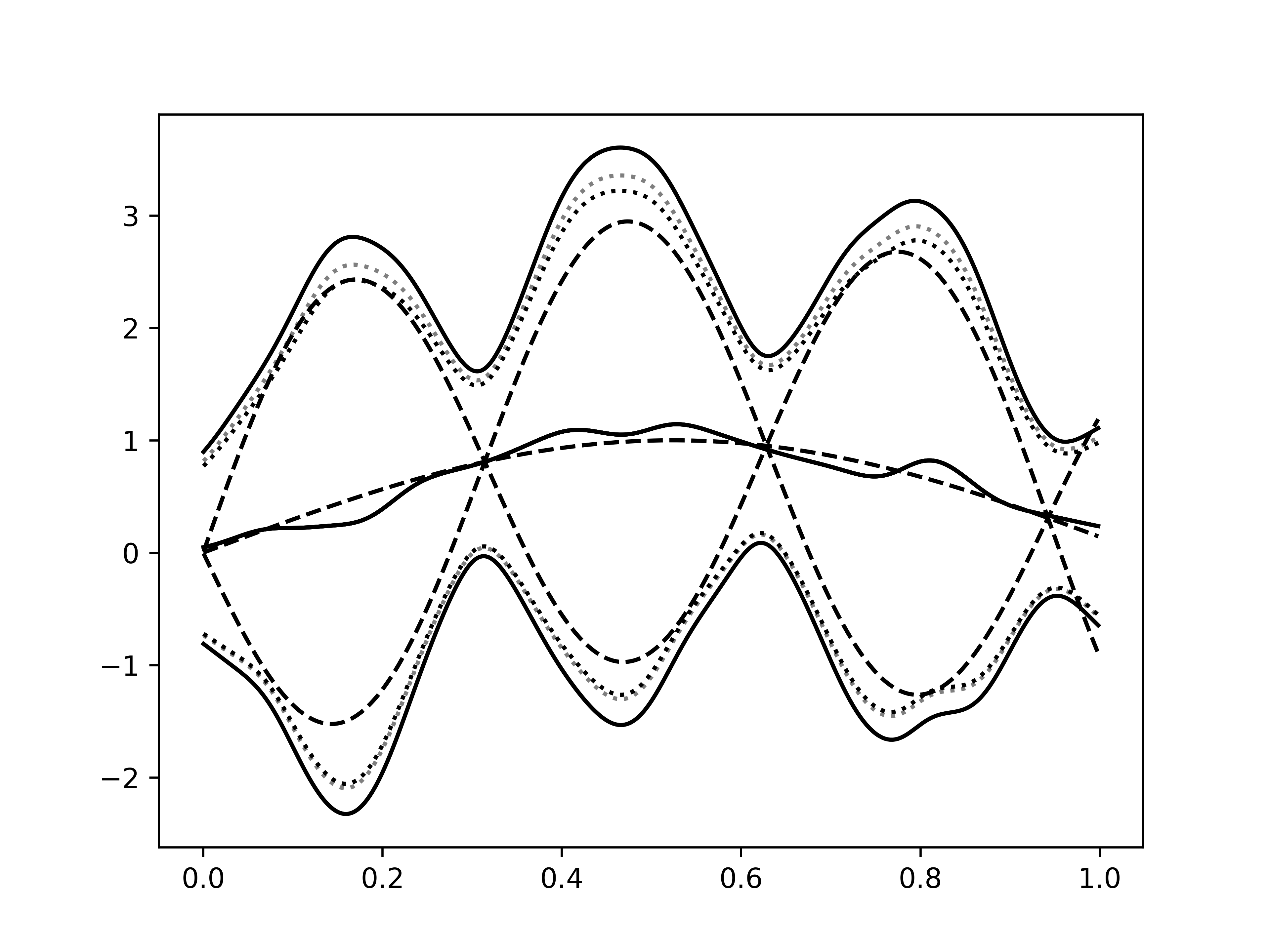}
	\end{center}
	\caption{Simulation of the regression problem. The solid lines are the upper and lower limit of the prediction interval using the $2$-PAC definition and the dotted lines represent the intervals with the $3$-PAC definition for increasing number of samples. The dashed lines represent the true regression function as well as the true conditional quantiles.}
	\label{fig:prediction_interval1D}
\end{figure}

Since our method of constructing prediction intervals is asymmetric we can have a skew distribution for $\epsilon$, in \cref{fig:prediction_interval1D_skew} we see the corresponding result when $\epsilon$ is assumed to be a standardized log-Normal.

We can see an example of the estimated coverage as we keep adding data in \cref{fig:prediction_interval1D_skew_coverage}

\begin{figure}
	\begin{center}
		\includegraphics[scale=0.5]{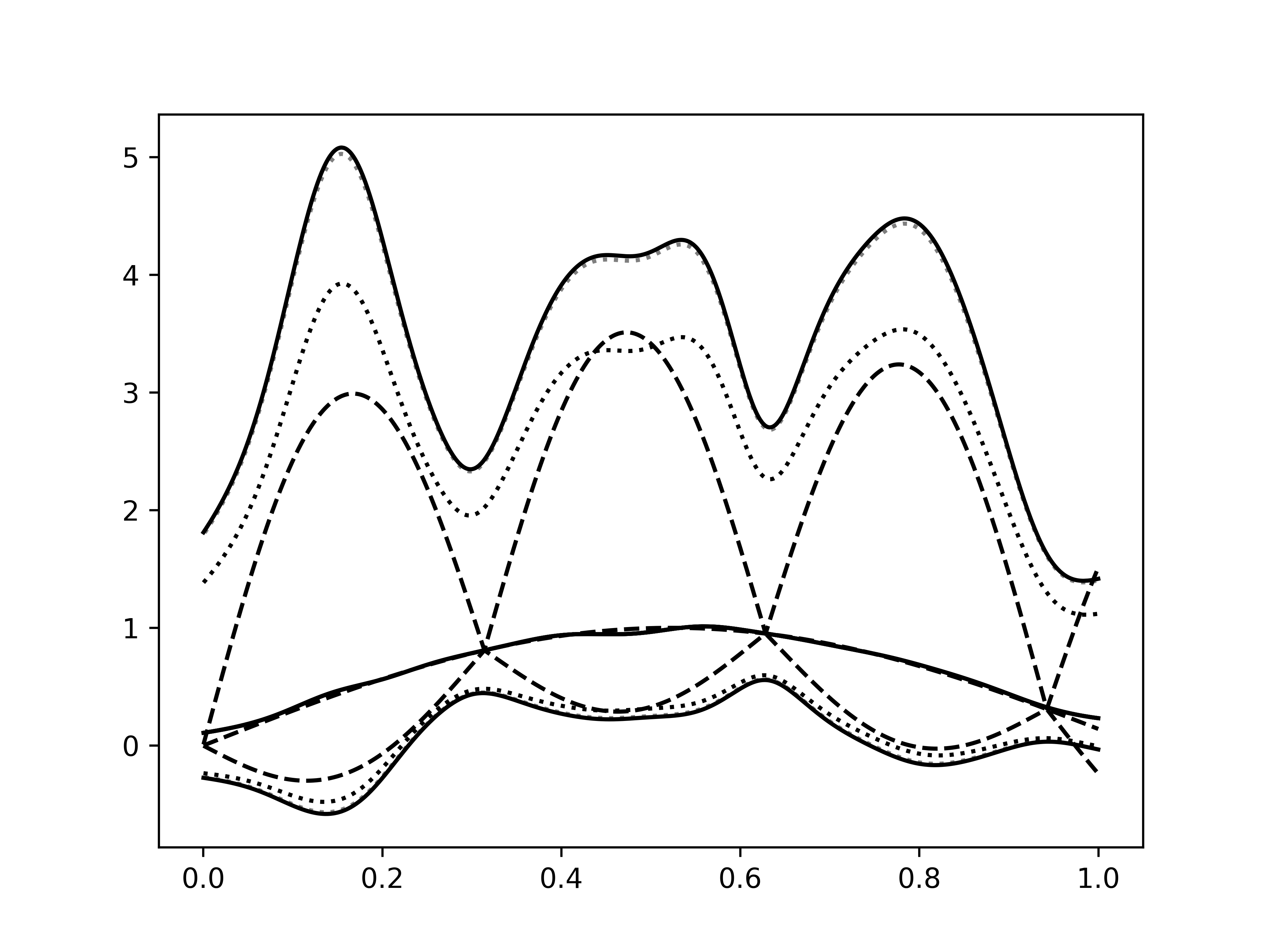}
	\end{center}
	\caption{Simulation of the skew regression problem. The solid lines are the upper and lower limit of the prediction interval using the $2$-PAC definition and the dotted lines represent the intervals with the $3$-PAC definition for increasing number of samples. The dashed lines represent the true regression function as well as the true conditional quantiles.}
	\label{fig:prediction_interval1D_skew}
\end{figure}

\begin{figure}
	\begin{center}
		\includegraphics[scale=0.5]{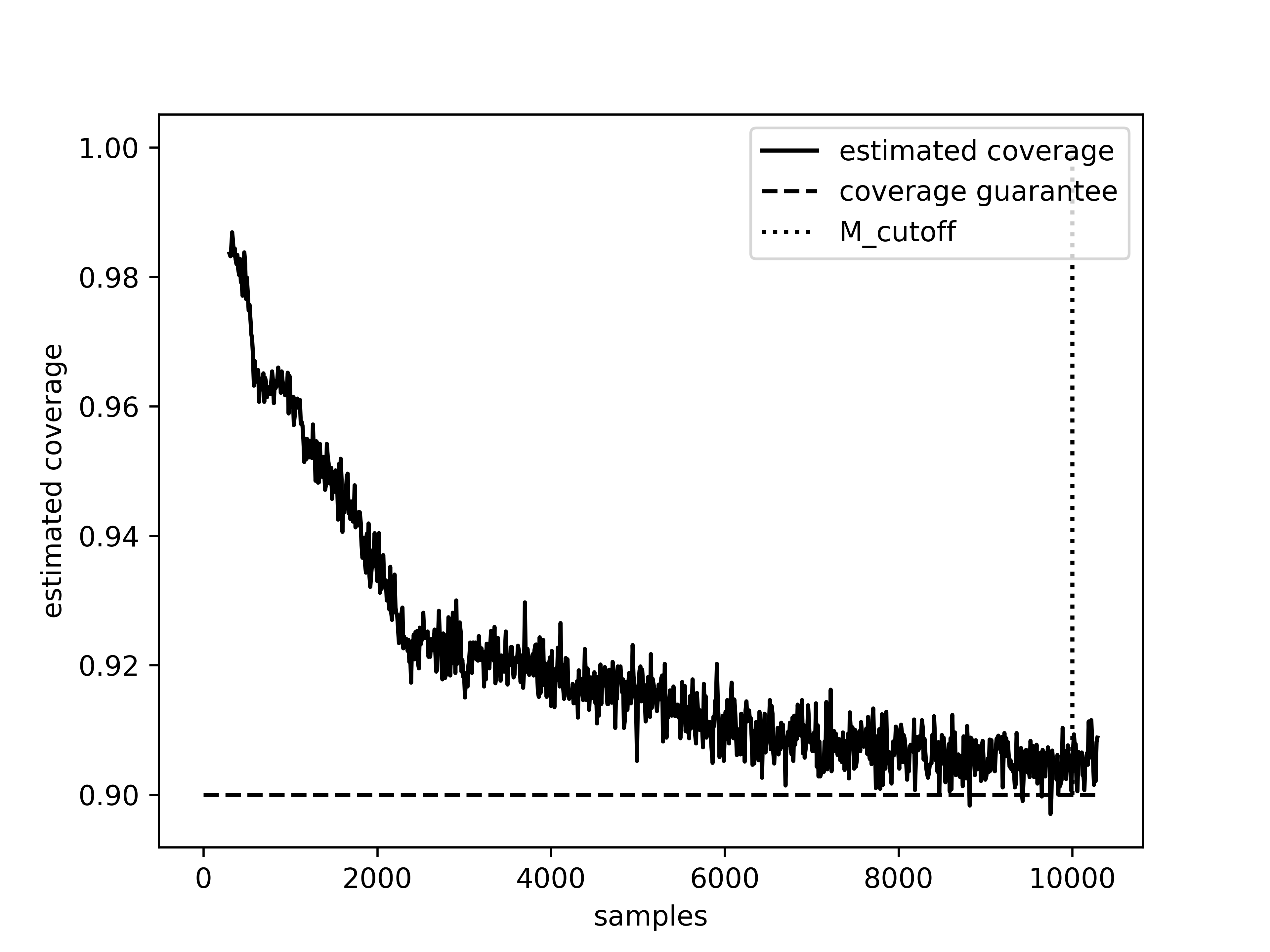}
	\end{center}
	\caption{Estimated coverage of the log-Normal noise using the 3-parameter PAC in \cref{sec:3PAC}.}
	\label{fig:prediction_interval1D_skew_coverage}
\end{figure}

If we instead of relying on the 3-parameter PAC we use \cref{thm:prediction}, we get also the data-dependent upper bound on the true coverage. A resulting simulation can be found in \cref{fig:prediction_interval1D_skew_uniform}.
\begin{figure}
	\begin{center}
		\includegraphics[scale=0.5]{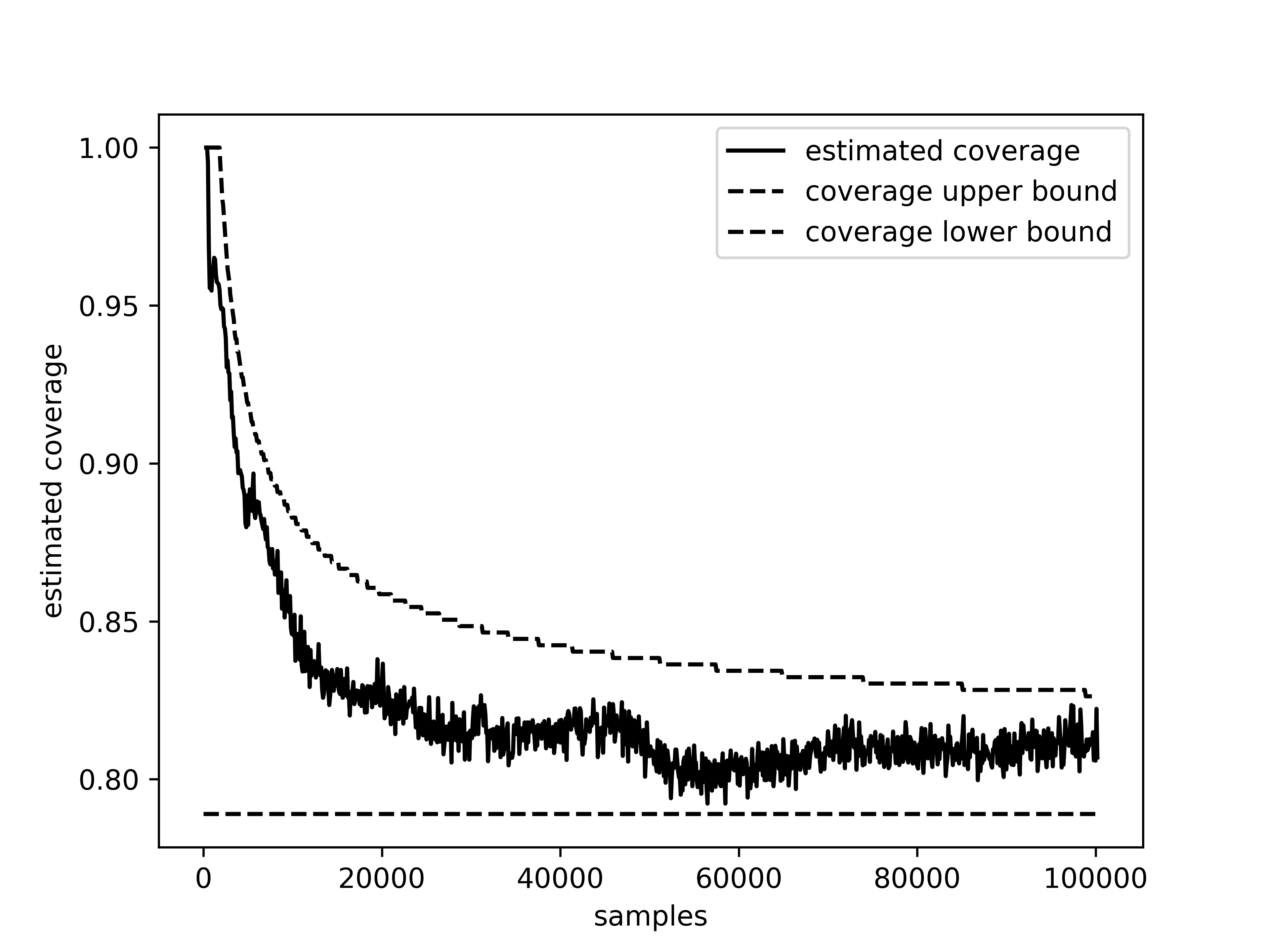}
	\end{center}
	\caption{Estimated coverage using \cref{thm:prediction}.}
	\label{fig:prediction_interval1D_skew_uniform}
\end{figure}

\clearpage

\appendix

\section{Helper lemmas}

\begin{theorem}[Bennett's inequality]\label{thm:bennett}
	Let $X_1,\ldots,X_n$ be i.i.d. random variables with finite variance such that $\P(X_i \leq b) = 1$ with mean zero. Let and $\sigma^2 = \V[X_i]$. Then for any $\epsilon > 0$,
	\begin{align*}
		\P(\overline X_n \geq \epsilon) \leq \exp \left ( - \frac{n \sigma^2}{b^2} h\left ( \frac{b \epsilon }{\sigma^2}\right ) \right )
	\end{align*}
	where $h(u) = (1+u) \log(1+u) - u$ for $u > 0$.
\end{theorem}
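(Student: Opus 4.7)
The plan is to apply the classical Cram\'er--Chernoff method: bound the moment generating function of a single centered $X_i$ using the a.s.~upper bound $b$ together with the variance $\sigma^2$, and then optimize the free parameter in the exponential Markov inequality to extract the rate function $h$.

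First, for any $\lambda > 0$, Markov's inequality applied to $e^{\lambda \sum_i X_i}$ together with the i.i.d.~assumption gives
\begin{align*}
	\P(\overline X_n \geq \epsilon) \leq e^{-\lambda n \epsilon} \left(\E[e^{\lambda X_1}]\right)^n.
\end{align*}
The crux is then to bound $\E[e^{\lambda X_1}]$ using only $X_1 \leq b$, $\E X_1 = 0$, and $\V X_1 = \sigma^2$. For this I would use the elementary fact that $\phi(x) := (e^x - 1 - x)/x^2$, continuously extended at $0$, is nondecreasing on all of $\R$. Applied with $x = \lambda X_1 \leq \lambda b$, this yields the pointwise bound $e^{\lambda X_1} \leq 1 + \lambda X_1 + X_1^2 (e^{\lambda b} - 1 - \lambda b)/b^2$. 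Taking expectation, using $\E X_1 = 0$ and $\E X_1^2 = \sigma^2$ together with $1+y \leq e^y$, gives
\begin{align*}
	\E[e^{\lambda X_1}] \leq \exp\left( \frac{\sigma^2}{b^2} (e^{\lambda b} - 1 - \lambda b) \right).
\end{align*}

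The last step is to minimize $\Psi(\lambda) := -\lambda n \epsilon + (n\sigma^2/b^2)(e^{\lambda b} - 1 - \lambda b)$ over $\lambda > 0$. Solving $\Psi'(\lambda) = 0$ yields the explicit minimizer $\lambda^\ast = b^{-1}\log(1 + b\epsilon/\sigma^2)$. Substituting back and writing $u = b\epsilon/\sigma^2$, a short algebraic simplification collapses the exponent to $-(n\sigma^2/b^2)[(1+u)\log(1+u) - u] = -(n\sigma^2/b^2) h(b\epsilon/\sigma^2)$, which is exactly the claimed inequality.

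The main obstacle is the MGF estimate above, and specifically the monotonicity of $\phi$ on the \emph{negative} half-line: on $x \geq 0$ it is immediate from the nonnegativity of the Taylor coefficients of $e^x - 1 - x$, but on $x < 0$ a short separate calculus argument is needed. This global monotonicity is precisely what allows the one-sided bound $X_1 \leq b$ to be combined with only second-moment information on the distribution; the remaining calculus optimization of $\Psi$ is routine.
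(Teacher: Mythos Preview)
Your argument is correct and is in fact the standard textbook proof of Bennett's inequality via the Cram\'er--Chernoff method; the MGF bound, the monotonicity of $\phi(x)=(e^x-1-x)/x^2$, and the explicit optimization of $\lambda$ are all accurate. Note, however, that the paper does not actually prove this statement: it is listed in the appendix as a named classical inequality and merely invoked in the proof of \cref{lem:quantile_confidence}, so there is no ``paper's own proof'' to compare against --- you have simply supplied the omitted justification.
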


\begin{lemma} \label{lem:quantile_confidence}
	Let $Z_1,\ldots$ be an i.i.d. sequence of random variables, let $Q$ and $F$ be the quantile function and distribution function respectively, then
	\begin{align*}
		\P(Q(p) \leq \hat Q_t(p - \epsilon)) \leq \exp \left ( - t p(1-p) h\left ( \frac{\epsilon }{p(1-p)}\right ) \right )
	\end{align*}
	where $h(u) = (1+u) \log(1+u) - u$ for $u > 0$.
	Thus for a given $\delta > 0$ we can find $\epsilon(t,p,\delta)$ such that
	\begin{align*}
		\P(Q(p) \leq \hat Q_t(p - \epsilon(t,p,\delta))) \leq \delta
	\end{align*}
\end{lemma}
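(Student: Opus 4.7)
The strategy is to translate the quantile event into a one-sided deviation for the empirical distribution function evaluated at $Q(p)$, and then invoke Bennett's inequality directly.

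First I would establish the inclusion
\begin{align*}
	\{Q(p) \leq \hat Q_t(p-\epsilon)\} \subseteq \{\hat F_t(Q(p)) \leq p-\epsilon\}.
\end{align*}
Since $\hat Q_t(p-\epsilon) = \sup\{x : \hat F_t(x) \leq p-\epsilon\}$ and $\hat F_t$ is a right-continuous step function with jumps at the observations, the supremum is attained, so in particular $\hat F_t(\hat Q_t(p-\epsilon)) \leq p-\epsilon$. Monotonicity of $\hat F_t$ then gives the inclusion whenever $Q(p) \leq \hat Q_t(p-\epsilon)$, reducing the problem to controlling the empirical distribution at the fixed point $Q(p)$.

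Next, set $W_i := F(Q(p)) - \mathbb{I}_{Z_i \leq Q(p)}$, which are i.i.d., mean zero, almost surely bounded above by $F(Q(p)) \leq 1$, with variance $F(Q(p))(1-F(Q(p)))$. Under the implicit assumption that $F$ is continuous at $Q(p)$ we have $F(Q(p)) = p$ and the event $\{\hat F_t(Q(p)) \leq p-\epsilon\}$ equals $\{\overline W_t \geq \epsilon\}$; in the general case $F(Q(p)) \geq p$ only deepens the tail, so the inclusion into $\{\overline W_t \geq \epsilon\}$ still holds.

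Applying \cref{thm:bennett} to $W_1,\ldots,W_t$ with the deliberately loose almost-sure upper bound $b = 1$ and variance $\sigma^2 = p(1-p)$ then yields exactly
\begin{align*}
	\P(\overline W_t \geq \epsilon) \leq \exp\!\left( -t\,p(1-p)\, h\!\left( \tfrac{\epsilon}{p(1-p)} \right) \right),
\end{align*}
which is the claimed bound. The second assertion follows by inverting the right-hand side: $h$ is strictly increasing on $[0,\infty)$ with $h(0)=0$, so one may take $\epsilon(t,p,\delta) = p(1-p)\, h^{-1}\!\left( \log(1/\delta)/(t\,p(1-p)) \right)$.

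The one step that requires real care is the choice $b=1$ rather than the sharper $b = F(Q(p)) \leq 1$ in Bennett: a tighter $b$ still gives a valid bound, but its shape would differ from the one in the statement, so matching the stated form forces this particular (slightly wasteful) choice. The only other subtlety is a clean treatment of the case where $F$ has a jump at $Q(p)$ and $F(Q(p)) > p$, which is absorbed harmlessly into the inclusion above.
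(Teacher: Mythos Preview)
Your proof matches the paper's line for line: translate the quantile event into $\{\hat F_t(Q(p))\le p-\epsilon\}$ and then apply Bennett with $b=1$ and $\sigma^2=p(1-p)$; your handling of the case $F(Q(p))>p$ and the explicit inversion for $\epsilon(t,p,\delta)$ are in fact more careful than the paper's one-line appeal.

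One small slip worth fixing: for a right-continuous step function the supremum in $\hat Q_t(q)=\sup\{x:\hat F_t(x)\le q\}$ is generally \emph{not} attained---it equals the next order statistic, at which $\hat F_t$ has already jumped strictly above $q$---so the sentence ``the supremum is attained, so $\hat F_t(\hat Q_t(p-\epsilon))\le p-\epsilon$'' is false as written. The inclusion you want still goes through: on $\{Q(p)<\hat Q_t(p-\epsilon)\}$ one has $\hat F_t(Q(p))\le p-\epsilon$ directly, and the boundary event $\{Q(p)=\hat Q_t(p-\epsilon)\}$ has probability zero under the continuity assumption you already invoke. The paper glosses over the identical point, so this is a cosmetic repair, not a gap in the strategy.
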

\begin{proof}
	First note that by the definition of the quantile function, we have
	\begin{align*}
		\P(Q(p) \leq Q_t(p-\epsilon)) =& \P(\hat F_t(Q(p)) \leq \hat F_t(Q_t(p-\epsilon))) 
		\\
		\leq&
		\P\left (\frac{1}{t} \sum_{i=1}^t \mathbb{I}_{\{Z_i \leq Q(p)\}} \leq p-\epsilon \right)
	\end{align*}
	Now $X_i = \mathbb{I}_{\{Z_i \leq Q(p)\}}$ is an i.i.d. sequence of Bernoulli random variables with probability $p$, thus an application of \cref{thm:bennett} gives the result.
\end{proof}

\section{Nadaraya-Watson for variance} \label{appendix:d.nadaraya.watson.variance}

The basis for the classical Nadaraya-Watson estimator is to first estimate the joint density of the pair $(X,Y) \sim F_{XY}$, where $X,Y \in \R$, using a Kernel density estimator. The Nadaraya-Watson estimator is then nothing but the regression function $\E[Y \mid X]$, where the expectation is taken with respect to the estimated density. Similarly we can estimate the conditional variance, namely we can estimate $\mathbb{V}[Y \mid X]$ by computing it using the estimated density. Here we provide the calculation for reference:

We begin by selecting a kernel $K_h: \R \to \R_+$ such that $\int K_h(x) dx = 1$, $\int xK_h(x)dx = 0$ and $\int x^2 K_h(x) dx = h^2$, i.e. a non-negative function, a classical choice is the Gaussian. Having now an i.i.d. sequence of observations $Z_i = (X_i,Y_i)$, for $i=1,\ldots,n$ we construct the kernel density estimator as
\begin{align*}
	\hat \rho_n(z) = \frac{1}{n}\sum_{i=1}^n K_h(x-X_i)K_h(y-Y_i).
\end{align*}

From this definition we compute
\begin{align*}
	\mathbb{V}_{\hat \rho_n} [Y \mid X] = \int y^2 \frac{\hat \rho_n(z)}{\int \hat \rho(x,y) dy} dy - \E_{\hat \rho_n}[Y|X]^2.
\end{align*}
From which we see that we first need to compute
\begin{align*}
	\int \hat \rho(x,y) dy 
	&= \int \frac{1}{n} \sum_{i=1}^n K_h(x-X_i)K_h(y-Y_i) dy \\
	&= \frac{1}{n} \sum_{i=1}^n K_h(x-X_i)
\end{align*}
Furthermore,
\begin{align*}
	\int y^2 \frac{\hat \rho_n(z)}{\int \hat \rho(x,y) dy} dy 
	&= 
	\frac{1}{n}\sum_{i=1}^n \int y^2 \frac{K_h(x-X_i)K_h(y-Y_i)}{\frac{1}{n} \sum_{j=1}^n K_h(x-X_j)} dy \\
	& = 
	\frac{1}{n}\sum_{i=1}^n \frac{K_h(x-X_i)}{\frac{1}{n} \sum_{j=1}^n K_h(x-X_j)} \int y^2 K_h(y-Y_i)dy
	\\
	&= \frac{\sum_{i=1}^n  K_h(x-X_i) (h^2 + Y_i^2)}{ \sum_{j=1}^n K_h(x-X_j)}
\end{align*}
thus we conclude that the Nadaraya-Watson kernel estimator of variance is
\begin{align*}
	\mathbb{V}_{\hat \rho_n} [Y \mid X] = \frac{\sum_{i=1}^n  K_h(x-X_i) (h^2 + Y_i^2)}{ \sum_{j=1}^n K_h(x-X_j)} - \left ( \frac{\sum_{i=1}^n  K_h(x-X_i) Y_i}{ \sum_{j=1}^n K_h(x-X_j)}\right )^2.
\end{align*}

\printbibliography

\end{document}